\theoremstyle{thmstyleone}%
\newtheorem{theorem}{Theorem}
\newtheorem{corollary}{Corollary}
\theoremstyle{thmstyletwo}%
\newtheorem{remark}{Remark}%
\theoremstyle{thmstylethree}%
\begin{document}

\title[Article Title]{Certified Lower Bounds and Efficient Estimation of Minimum Accuracy in Quantum Kernel Methods}

\author*[1]{\fnm{Demerson} \sur{N. Gonçalves}}\email{demerson.goncalves@cefet-rj.br}

\author[2]{\fnm{Tharso}  \sur{D. Fernandes}}\email{tharso.fernandes@ufes.br}

\author[3]{\fnm{Andrias } \sur{M. M. Cordeiro}}\email{andrias.cordeiro@aluno.cefet-rj.br}
\author[3]{\fnm{Pedro  } \sur{H. G. Lugao}}\email{pedro.lugao@cefet-rj.br}
\author[4]{\fnm{João  } \sur{T. Dias}}\email{joao.dias@cefet-rj.br}


\affil*[1]{Dept. of Mathematics, Federal Center for Technological Education Celso Suckow da Fonseca (CEFET-RJ), Petrópolis, RJ, Brazil}
\affil[2]{Dept. of Mathematics, Federal University of Espírito Santo (UFES), Alegre, ES, Brazil}
\affil[3]{Dept. of Computer Engineering, CEFET-RJ, Petrópolis, RJ, Brazil}
\affil[4]{Dept. of Telecommunications Engineering, CEFET-RJ, Rio de Janeiro, RJ, Brazil}

\abstract{
The minimum accuracy heuristic evaluates quantum feature maps without requiring full quantum support vector machine (QSVM) training. However, the original formulation is computationally expensive, restricted to balanced datasets, and lacks theoretical backing. This work generalizes the metric to arbitrary binary datasets and formally proves it constitutes a certified lower bound on the optimal empirical accuracy of any linear classifier in the same feature space. Furthermore, we introduce Monte Carlo strategies to efficiently estimate this bound using a random subset of Pauli directions, accompanied by rigorous probabilistic guarantees. These contributions establish minimum accuracy as a scalable, theoretically sound tool for pre-screening feature maps on near-term quantum devices.
}

\keywords{Quantum kernel methods, Minimum accuracy, Feature map evaluation, Pauli decomposition.}

\maketitle

\section{Introduction}
\label{sec:intro}
Machine learning (ML) is a key part of modern science and technology, supporting tasks such as pattern recognition, decision-making under uncertainty, and automated discovery in complex data \cite{russel2010}. Quantum machine learning (QML) is a growing field that aims to go beyond classical ML by exploiting the structure and resources of quantum systems \cite{Rocchetto2018}. One of the most promising QML paradigms is based on quantum kernel methods. Central to this framework is the \emph{feature map}, a transformation that embeds input data into a high-dimensional vector space to render complex, non-linear patterns linearly separable. In the quantum setting, this map encodes classical data into quantum states through parameterized quantum circuits, allowing classical algorithms such as support vector machines (SVMs) to operate on the resulting quantum-induced feature space \cite{ayachi, petruccione}. Quantum kernel methods are naturally suited to Noisy Intermediate-Scale Quantum (NISQ) devices because the most expensive operation, evaluating the kernel entries via quantum circuits, is delegated to the quantum hardware, while optimization and training are performed classically \cite{wang}. Foundational works by Havlíček et al.~\cite{havlicek2019supervised} and Schuld and Killoran~\cite{schuld2019quantum} introduced the notion of quantum feature maps, establishing the basic hybrid quantum--classical framework. Subsequent research has investigated the expressivity generalization properties of these maps, as well as their implementation on real quantum processors~\cite{tacchino2020quantum, huang2021power,li2022quantumkernel, raubitzek}.

A practical challenge in this setting is how to compare different quantum feature maps without having to train a complete quantum support vector machine (QSVM) for each candidate. Suzuki et al.~\cite{suzuki2022analysis} addressed this problem by proposing the \emph{minimum accuracy} heuristic, which estimates the best achievable classification accuracy when measurements are restricted to axis-aligned Pauli observables. Their method avoids explicit optimization and can be resource-efficient in small and medium-scale quantum systems, since it eliminates repeated training. However, the original formulation has three important limitations: it assumes balanced, even-sized datasets; it is presented as a heuristic without a formal proof that it lower-bounds the optimal SVM accuracy in the same feature space; and its computation requires a full Pauli decomposition, whose cost scales as $4^{n}$ for $n$ qubits, quickly becoming intractable for larger systems.

In this work we advance the analysis and practical usability of the minimum accuracy in three main directions, going beyond the original proposal \cite{suzuki2022analysis}. First, we generalize the definition of minimum accuracy to arbitrary binary datasets (Section~\ref{sec:generalized}), removing assumptions on sample size and class balance; our formulation is valid for any number of training examples $N$ and any label distribution, while preserving the geometric intuition of axis-aligned thresholds in the Pauli feature space. Second, we provide a rigorous theoretical result (Theorem~\ref{thm:lower_bound}) showing that the generalized minimum accuracy $R_{\min}$ is always a certified lower bound on the optimal empirical accuracy $R^{*}$ achievable by a linear SVM in the same feature space, i.e., $R_{\min} \le R^{*}$; this formally justifies the use of minimum accuracy as a provable baseline rather than a purely empirical heuristic. Third, we introduce a family of Monte Carlo-based axis-selection strategies (Section~\ref{sec:monte-carlo}) that estimate $R_{\min}$ from a random subset of Pauli-feature axes and derive statistical guarantees, via a quantile-coverage analysis, that relate the number of sampled axes to the probability of capturing high-performing directions (Theorem~\ref{thm:quantile-coverage} and Corollary~\ref{cor:sample-size}) \cite{Rubinstein2016}.

From a practical standpoint, our results show that the minimum accuracy can be turned into a scalable, certified tool for evaluating quantum feature maps in QSVM pipelines. The generalized definition makes the metric applicable to real-world, possibly imbalanced datasets; the lower-bound theorem connects it directly to SVM performance in the same kernel-induced feature space; and the Monte Carlo strategies drastically reduce the number of Pauli axes that need to be evaluated in high-dimensional spaces. This approach suggests a concrete workflow: for any quantum feature map $\Phi : \mathcal{X} \to \mathbb{R}^{4^{n}},$ one avoids the exponential cost of characterizing the full space by drawing a random sample of $t \ll 4^{n}$ Pauli axes; one then obtains the dataset projections $\{a_i(x_k)\}_{k=1}^N$ only for these sampled directions, where $a_i(x) = \mathrm{Tr}[\rho(x)\sigma_i]$ (Section~\ref{sec:generalized}), and computes the corresponding axis-wise accuracies. This yields a scalable approximation with explicit guarantees that the estimated minimum accuracy exceeds a target threshold with high confidence. Finally, this certified baseline allows one to discard feature maps with low $R_{\min}$ and prioritize those with high $R_{\min}$ before investing in full QSVM training. In this way, the computational cost of QSVM pipeline design can be reduced from $\mathcal{O}(N^{2} \cdot 4^{n} + N^{3})$ (classical simulation of full kernel SVM) or $\tilde{\mathcal{O}}(N^{4.67}/\varepsilon^{2})$ quantum circuit evaluations~\cite{gentinetta2024complexity} to $\mathcal{O}(t N \log N)$ classical operations, where $t = \mathcal{O}\bigl(\frac{1}{p}\log\frac{1}{\delta}\bigr)$ is typically small ($t \approx 10$--$100$) when good axes exist.

The remainder of this paper is organized as follows. Section~\ref{sec:generalized} introduces the generalized minimum accuracy framework. Section~\ref{sec:lower_bound} provides the formal proof that $R_{\min} \le R^{*}$. Section~\ref{sec:monte-carlo} establishes the theoretical foundations for Monte Carlo axis selection, followed by Section~\ref{sec:mc-methods}, which details the concrete sampling strategies. Section~\ref{sec:experiments} reports the empirical results on synthetic datasets. Finally, Section~\ref{sec:conclusion} concludes with a discussion and directions for future work.

\section{Generalized minimum accuracy}
\label{sec:generalized}

We consider a binary classification problem with a dataset
$
D = \{(x_k, y_k)\}_{k=1}^N,
$
where each input $x_k \in \mathcal{X}$ is associated with a label $y_k \in \{-1,+1\}$. 

A feature map
$\Phi : \mathcal{X} \to \mathbb{R}^d$
embeds the data into a $d$‑dimensional real feature space. In the quantum setting, $\Phi$ is typically induced by a parameterized quantum circuit acting on $n$ qubits, and we have $d = 4^n$, with coordinates given by expectation values of Pauli observables,
\begin{equation}
    \Phi(x) = \bigl(a_1(x), a_2(x), \dots, a_d(x)\bigr),
    \qquad
    a_i(x) = \mathrm{Tr}\bigl[\rho(x)\,\sigma_i\bigr],
\end{equation}
where $\rho(x)$ is the quantum state prepared from $x$ and $\{\sigma_i\}_{i=1}^d$ is an orthonormal Pauli basis. The validity of this decomposition and the interpretation of coefficients as expectation values follow from the completeness of the Pauli group in the space of Hermitian operators \cite{nielsen_chuang}. We refer to each coordinate direction $i \in \{1,\dots,d\}$ as a \emph{Pauli‑feature axis}.

The idea behind minimum accuracy is to evaluate, for each Pauli‑feature axis, how well the dataset can be separated by a one‑dimensional threshold on that coordinate. Fix an index $i \in \{1,\dots,d\}$ and a threshold $\tau \in \mathbb{R}$. We consider the one‑dimensional classifier
\begin{equation}
    f_{i,\tau}(x) = \mathrm{sign}\bigl(a_i(x) - \tau\bigr),
\end{equation}
which assigns all points with $a_i(x) \ge \tau$ to one side of the threshold and all points with $a_i(x) < \tau$ to the other side. Since the overall sign of the classifier is arbitrary with respect to the labels $\{-1,+1\}$, for each pair $(i,\tau)$ we can choose between $f_{i,\tau}$ and $-f_{i,\tau}$ in order to maximize the empirical accuracy.

To write this explicitly, let $N_{+}^{\mathrm{total}}$ and $N_{-}^{\mathrm{total}}$ denote the total numbers of positive and negative examples in $D$,
\begin{equation}
    N_{+}^{\mathrm{total}} = \bigl|\{k : y_k = +1\}\bigr|,
    \qquad
    N_{-}^{\mathrm{total}} = \bigl|\{k : y_k = -1\}\bigr|,
\end{equation}
and, for a given axis $i$ and threshold $\tau$, let $N_{+}^{\tau}$ and $N_{-}^{\tau}$ denote the numbers of positive and negative examples that fall \emph{below} the $\tau$,
\begin{equation}
    N_{+}^{\tau} = \bigl|\{k : y_k = +1,\; a_i(x_k) < \tau\}\bigr|,
    \qquad
    N_{-}^{\tau} = \bigl|\{k : y_k = -1,\; a_i(x_k) < \tau\}\bigr|.
\end{equation}
If we classify all points with $a_i(x) < \tau$ as $+1$ and all points with $a_i(x) \ge \tau$ as $-1$, the number of correctly classified samples is
$
N_{+}^{\tau} + \bigl(N_{-}^{\mathrm{total}} - N_{-}^{\tau}\bigr).
$
If instead we flip the assignment and classify points below the threshold as $-1$ and those above as $+1$, the number of correctly classified samples is
$
N_{-}^{\tau} + \bigl(N_{+}^{\mathrm{total}} - N_{+}^{\tau}\bigr).
$
The best empirical accuracy attainable by a threshold at position $\tau$ on axis $i$, allowing for this global label flip, is therefore
\begin{equation}
    R_i^{\tau}
    \;=\;
    \frac{1}{N}
    \max\Big\{
        N_{+}^{\tau} + \bigl(N_{-}^{\mathrm{total}} - N_{-}^{\tau}\bigr),\;
        N_{-}^{\tau} + \bigl(N_{+}^{\mathrm{total}} - N_{+}^{\tau}\bigr)
    \Big\}.
    \label{eq:axis-accuracy}
\end{equation}

For a fixed axis $i$, we can now optimize over all possible thresholds $\tau$ to obtain the best one‑dimensional classifier aligned with that axis. This yields the axis‑wise optimal accuracy
\begin{equation}
    r_i \;=\; \sup_{\tau \in \mathbb{R}} R_i^{\tau}.
    \label{eq:axis-opt}
\end{equation}
In practice, since accuracy can only change when $\tau$ crosses one of the values $\{a_i(x_k)\}_{k=1}^N$, the supremum in~\eqref{eq:axis-opt} can be found by scanning thresholds between consecutive sorted feature values along axis $i$, a procedure requiring $\mathcal{O}(N \log N)$ time per axis.

The \emph{generalized minimum accuracy} is then defined as the best empirical accuracy achievable by any axis‑aligned threshold classifier across all Pauli‑feature axes,
\begin{equation}
    R_{\min}
    \;=\;
    \max_{1 \le i \le d} \; r_i
    \;=\;
    \max_{1 \le i \le d} \; \sup_{\tau \in \mathbb{R}} R_i^{\tau}.
    \label{eq:Rmin-def}
\end{equation}
Equivalently, let
\begin{equation}
    \mathcal{F}_{\mathrm{axis}}
    \;=\;
    \bigl\{ \pm f_{i,\tau} : i \in \{1,\dots,d\},\; \tau \in \mathbb{R} \bigr\}
\end{equation}
denote the class of all axis‑aligned threshold classifiers (up to a global sign flip). If we define the empirical accuracy of a classifier $f$ on the dataset $D$ as
\begin{equation}\label{empirical_acc}
    R(f) \;=\; \frac{1}{N} \sum_{k=1}^N \mathbb{I}\bigl[f(x_k) = y_k\bigr],
\end{equation}
where $\mathbb{I}[\cdot]$ is the indicator function taking value $1$ if the condition is true and $0$ otherwise, then~\eqref{eq:Rmin-def} can be written compactly as
\begin{equation}
    R_{\min}
    \;=\;
    \sup_{f \in \mathcal{F}_{\mathrm{axis}}} R(f).
    \label{eq:Rmin-hypothesis-class}
\end{equation}

This viewpoint, in terms of a restricted hypothesis class $\mathcal{F}_{\mathrm{axis}}$, will be crucial in the next section, where we show that $R_{\min}$ is always a lower bound on the optimal empirical accuracy over all linear classifiers in the feature space.

\section{Minimum accuracy as a lower bound on optimal linear classifiers}
\label{sec:lower_bound}

In this section we formalize the relationship between the generalized minimum accuracy and the best achievable empirical accuracy of linear classifiers operating in the same feature space. We adopt the same notation as in Section~\ref{sec:generalized} for the dataset $D$ and the feature map $\Phi : \mathcal{X} \to \mathbb{R}^d$. While the previous section focused on the specific structure of quantum Pauli-feature axes, the arguments presented here rely solely on the existence of a real-valued feature representation and apply generally to any feature map.

We consider the hypothesis class of all linear classifiers in the feature space,
\begin{equation}
    \mathcal{F} \;=\; \bigl\{ f_{w,b}(x) = \mathrm{sign}(\langle w, \Phi(x) \rangle + b) : w \in \mathbb{R}^d,\, b \in \mathbb{R} \bigr\}.
\end{equation}
Recalling the definition of empirical accuracy $R(f)$ from Eq.~\eqref{eq:Rmin-hypothesis-class}, we define the optimal empirical accuracy over all linear classifiers as
\begin{equation}
    R^* \;=\; \sup_{f \in \mathcal{F}} R(f).
    \label{eq:R-star-def}
\end{equation}
In practice, a hard-margin or soft-margin SVM trained with the kernel induced by $\Phi$ aims to approximate this optimum, so $R^*$ can be seen as an idealized benchmark for QSVM performance in the given feature space.

The relationship between this general optimum and the minimum accuracy $R_{\min}$ follows directly from the geometry of the hypothesis classes. Observe that the set of axis-aligned classifiers $\mathcal{F}_{\mathrm{axis}}$ forms a subset of $\mathcal{F}$: any classifier $f_{i,\tau} \in \mathcal{F}_{\mathrm{axis}}$ is equivalently a linear classifier defined by a weight vector $w$ with a single non-zero entry at index $i$ (specifically, a standard basis vector scaled by $\pm 1$) and a bias $b = -\tau$. Since $\mathcal{F}_{\mathrm{axis}} \subseteq \mathcal{F}$, optimizing over the restricted set of axis-aligned hyperplanes cannot yield a higher accuracy than optimizing over the full space of linear separators. This inclusion leads immediately to the following theorem.

\begin{theorem}[Minimum accuracy lower bound]
\label{thm:lower_bound}
Let $\Phi : \mathcal{X} \to \mathbb{R}^d$ be a feature map and let $D = \{(x_k, y_k)\}_{k=1}^N$ be a binary labeled dataset. Let $\mathcal{F}$ be the class of all linear classifiers in the feature space as defined above, with optimal empirical accuracy $R^*$ given by~\eqref{eq:R-star-def}. Let $R_{\min}$ denote the generalized minimum accuracy defined in~\eqref{eq:Rmin-hypothesis-class}. Then
\begin{equation}
    R_{\min} \;\le\; R^*.
\end{equation}
\end{theorem}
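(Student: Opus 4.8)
The plan is to prove the inequality by a direct set-inclusion argument, using the characterization of $R_{\min}$ as a supremum of empirical accuracy over the restricted hypothesis class $\mathcal{F}_{\mathrm{axis}}$ in~\eqref{eq:Rmin-hypothesis-class}, together with the analogous characterization of $R^{*}$ in~\eqref{eq:R-star-def}. The core observation, already anticipated in the discussion preceding the theorem, is that $\mathcal{F}_{\mathrm{axis}} \subseteq \mathcal{F}$: the supremum defining $R^{*}$ ranges over a family of classifiers containing the one defining $R_{\min}$, and monotonicity of the supremum under set inclusion then yields the claim.

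First I would make the inclusion $\mathcal{F}_{\mathrm{axis}} \subseteq \mathcal{F}$ explicit. Fix an axis index $i \in \{1,\dots,d\}$ and a threshold $\tau \in \mathbb{R}$, and let $e_i \in \mathbb{R}^d$ denote the $i$-th standard basis vector. Setting $w = e_i$ and $b = -\tau$ gives $\langle w, \Phi(x)\rangle + b = a_i(x) - \tau$, so $f_{w,b} = f_{i,\tau}$; choosing instead $w = -e_i$ and $b = \tau$ gives $f_{w,b} = -f_{i,\tau}$. Hence every element $\pm f_{i,\tau}$ of $\mathcal{F}_{\mathrm{axis}}$ coincides with some $f_{w,b} \in \mathcal{F}$, establishing the inclusion.

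Next I would invoke the elementary monotonicity property: for any sets $A \subseteq B$ of classifiers, $\sup_{f \in A} R(f) \le \sup_{f \in B} R(f)$, since every value attained on $A$ is also attained on $B$ (and this holds whether or not the suprema are attained, though for a finite dataset both are). Applying this with $A = \mathcal{F}_{\mathrm{axis}}$ and $B = \mathcal{F}$ gives
\begin{equation*}
    R_{\min} \;=\; \sup_{f \in \mathcal{F}_{\mathrm{axis}}} R(f) \;\le\; \sup_{f \in \mathcal{F}} R(f) \;=\; R^{*},
\end{equation*}
which is the desired bound.

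The argument is short, so there is no substantial obstacle; the only point requiring a little care is the boundary case $a_i(x_k) = \tau$, where the sign conventions in the definitions of $f_{i,\tau}$ and $f_{w,b}$ must be reconciled. Since $R_i^{\tau}$ changes only when $\tau$ crosses one of the finitely many feature values $\{a_i(x_k)\}_{k=1}^N$, one may replace any threshold realizing (or approaching) the supremum by a nearby value strictly between two consecutive sorted feature values without changing $R_i^{\tau}$; this removes all ties and makes the correspondence above exact. Alternatively, one can simply fix a single convention for $\mathrm{sign}(0)$ throughout. With this caveat dispatched, the inclusion-plus-monotonicity argument completes the proof.
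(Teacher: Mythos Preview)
Your proposal is correct and follows essentially the same route as the paper: establish the inclusion $\mathcal{F}_{\mathrm{axis}} \subseteq \mathcal{F}$ by writing each axis-aligned threshold classifier as a linear classifier with weight $\pm e_i$ and bias $\mp\tau$, then apply monotonicity of the supremum under set inclusion. Your version is in fact slightly more detailed than the paper's, since you spell out the explicit choice of $(w,b)$ and address the $\mathrm{sign}(0)$ tie-breaking edge case, neither of which the paper makes explicit.
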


\begin{proof}
The proof relies on the property that the supremum of a function over a set is at least the supremum over any subset. Since every axis-aligned classifier corresponds to a specific configuration of weights and bias in the linear model, we have the inclusion $\mathcal{F}_{\mathrm{axis}} \subseteq \mathcal{F}$. Consequently,
\begin{equation}
    R_{\min} = \sup_{f \in \mathcal{F}_{\mathrm{axis}}} R(f) \;\le\; \sup_{f \in \mathcal{F}} R(f) = R^*,
\end{equation}
which proves the claim.
\end{proof}

This theorem establishes that $R_{\min}$ is a certified lower bound on the performance of any linear classifier in the feature space induced by $\Phi$. Consequently, when a QSVM is trained using the kernel $K(x,x') = \langle \Phi(x), \Phi(x') \rangle$, its accuracy (in the limit of optimal training) cannot fall below $R_{\min}$. Therefore, $R_{\min}$ provides a provable baseline for the linear separability of the data under a given quantum feature map, which can be computed efficiently without solving the full SVM optimization problem.

\section{Monte Carlo axis selection with statistical guarantees}
\label{sec:monte-carlo}

The exact computation of $R_{\min}$ requires evaluating all $d = 4^n$ Pauli-feature axes, which becomes intractable as the number of qubits $n$ increases. To address this, we develop Monte Carlo methods that estimate $R_{\min}$ from a small random subset of axes while providing rigorous statistical guarantees.

\subsection{Estimator definition and basic properties}

For a random subset $T \subset \{1,\dots,d\}$ with $|T| = t \ll d$, we define the Monte Carlo estimator
\begin{equation}
    \widehat{R}_{\min}(T) = \max_{i \in T}  r_i,
\end{equation}
where $r_i = \sup_{\tau \in \mathbb{R}} R_i^{\tau}$ is the optimal accuracy along axis $i$ as defined in Eq.~\eqref{eq:axis-opt}. This estimator selects the best-performing axis from the random subset $T$.

\begin{theorem}[Monte Carlo lower bound property]
\label{thm:mc-lower-bound}
For any subset $T \subseteq \{1,\dots,d\}$,
\begin{equation}
    \widehat{R}_{\min}(T) \le R_{\min} \le R^{*}.
\end{equation}
If $T$ is drawn uniformly at random with $|T| = t$, then $\mathbb{E}[\widehat{R}_{\min}(T)]$ is non-decreasing in $t$ and satisfies $\lim_{t \to d} \mathbb{E}[\widehat{R}_{\min}(T)] = R_{\min}$.
\end{theorem}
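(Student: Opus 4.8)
The plan is to prove the three assertions of Theorem~\ref{thm:mc-lower-bound} in sequence, all of which follow from elementary properties of maxima over finite sets together with Theorem~\ref{thm:lower_bound}. For the first chain of inequalities, observe that for any fixed subset $T \subseteq \{1,\dots,d\}$ the estimator $\widehat{R}_{\min}(T) = \max_{i \in T} r_i$ is a maximum over a subset of the index set $\{1,\dots,d\}$, so it cannot exceed $\max_{1 \le i \le d} r_i = R_{\min}$; combining this with the already-established bound $R_{\min} \le R^{*}$ from Theorem~\ref{thm:lower_bound} gives $\widehat{R}_{\min}(T) \le R_{\min} \le R^{*}$.

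For monotonicity of the expectation in $t$, I would use a coupling argument. Let $T_t$ be a uniformly random subset of size $t$ and $T_{t+1}$ one of size $t+1$; the standard construction obtains $T_{t+1}$ by drawing $T_t$ uniformly and then adding one more index chosen uniformly from the complement, which indeed yields a uniform $(t+1)$-subset. Under this coupling $T_t \subseteq T_{t+1}$ pointwise, hence $\widehat{R}_{\min}(T_t) = \max_{i \in T_t} r_i \le \max_{i \in T_{t+1}} r_i = \widehat{R}_{\min}(T_{t+1})$ surely, and taking expectations preserves the inequality: $\mathbb{E}[\widehat{R}_{\min}(T_t)] \le \mathbb{E}[\widehat{R}_{\min}(T_{t+1})]$. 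Alternatively one can avoid the coupling entirely and argue by symmetry that $\mathbb{E}[\widehat{R}_{\min}(T_t)]$ equals the average of $\max_{i \in S} r_i$ over all $\binom{d}{t}$ subsets $S$ of size $t$, and that this average is non-decreasing because every size-$t$ subset is contained in some size-$(t+1)$ subset whose max dominates it; I would present whichever is cleaner, likely the coupling.

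For the limit, the point is simply that when $t = d$ the only uniformly random subset is $T = \{1,\dots,d\}$ itself (deterministically), so $\widehat{R}_{\min}(T) = \max_{1 \le i \le d} r_i = R_{\min}$ with probability one, giving $\mathbb{E}[\widehat{R}_{\min}(T)] = R_{\min}$; since the expectation is non-decreasing in $t$ and bounded above by $R_{\min}$ for every $t$, it converges to $R_{\min}$ as $t \to d$. I expect no genuine obstacle here — the only point requiring a little care is making the coupling explicit enough to justify the pointwise inclusion $T_t \subseteq T_{t+1}$, since that is what transfers the set-theoretic monotonicity of $\max$ to the level of expectations; everything else is a direct invocation of ``max over a subset is no larger than max over the whole set'' plus the previously proved bound $R_{\min} \le R^*$.
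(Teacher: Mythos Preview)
Your proposal is correct and follows essentially the same approach as the paper: the paper's proof is a terse two-sentence sketch invoking the subset inclusion (for the inequality chain) and ``coupling arguments for expanding sets'' (for the expectation monotonicity), and you have filled in exactly those details. Your treatment of the limit at $t=d$ is slightly more explicit than the paper's, but there is no substantive difference in method.
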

\begin{proof}
The inequality chain follows directly from $\mathcal{F}_{\mathrm{axis}} \subseteq \mathcal{F}$ and the definition of supremum. The expectation result follows from coupling arguments for expanding sets.
\end{proof}

\subsection{Quantile coverage guarantees}

To quantify the probability that the estimator exceeds a target accuracy threshold, we define the survival function (tail distribution) of axis-wise accuracies:
\begin{equation}
    S(\eta) = \frac{1}{d} \left| \left\{ i : r_i \ge \eta \right\} \right|,
\end{equation}
which represents the fraction of axes with accuracy at least $\eta$.

\begin{theorem}[Quantile coverage]
\label{thm:quantile-coverage}
Fix $\eta \in \mathbb{R}$ and let $p = S(\eta)$. If $T$ is sampled uniformly without replacement with $|T| = t$, then
\begin{equation}
    \mathbb{P}\left( \widehat{R}_{\min}(T) \ge \eta \right)
    = 1 - \frac{\binom{d - \lfloor p d \rfloor}{t}}{\binom{d}{t}}
    \ge 1 - (1 - p)^t.
    \label{eq:coverage-prob}
\end{equation}
\end{theorem}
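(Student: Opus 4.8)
The plan is to reduce the event $\{\widehat{R}_{\min}(T) \ge \eta\}$ to an elementary statement about whether the random index set $T$ meets the deterministic set of ``good'' axes, and then to evaluate the resulting hypergeometric probability exactly before relaxing it to the closed-form bound. First I would set $m = |\{i : r_i \ge \eta\}|$, so that by definition of the survival function $m = d\,S(\eta) = pd$; in particular $pd$ is an integer and $\lfloor pd \rfloor = pd = m$, so the floor in the statement is vacuous. Since $\widehat{R}_{\min}(T) = \max_{i \in T} r_i$, the estimator reaches the threshold precisely when $T$ contains at least one good index, i.e.
\begin{equation}
    \bigl\{\widehat{R}_{\min}(T) \ge \eta\bigr\} = \bigl\{\, T \cap \{i : r_i \ge \eta\} \ne \emptyset \,\bigr\}.
\end{equation}

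Next I would compute the complementary probability that $T$ avoids every good axis. Because $T$ is a uniformly random $t$-element subset of $\{1,\dots,d\}$, the number of choices of $T$ lying entirely within the $d - m$ bad axes is $\binom{d-m}{t}$ (read as $0$ when $t > d-m$), out of $\binom{d}{t}$ equally likely subsets. Hence $\mathbb{P}\bigl(T \cap \{\text{good}\} = \emptyset\bigr) = \binom{d-m}{t}/\binom{d}{t}$, and taking complements yields the exact identity asserted in~\eqref{eq:coverage-prob}.

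For the inequality I would express the ratio as a finite product, $\binom{d-m}{t}/\binom{d}{t} = \prod_{j=0}^{t-1} \frac{d-m-j}{d-j}$, and bound each factor by $\frac{d-m}{d} = 1-p$. This reduces to $d(d-m-j) \le (d-m)(d-j)$, which after expansion is equivalent to $0 \le mj$ and therefore holds for all $j \ge 0$; when $t > d-m$ the product is $0$ and the bound $\le (1-p)^t$ is immediate. For $t \le d-m$ all factors are strictly positive, so multiplying the $t$ factorwise inequalities gives $\binom{d-m}{t}/\binom{d}{t} \le (1-p)^t$, and substituting into the exact identity produces $\mathbb{P}\bigl(\widehat{R}_{\min}(T) \ge \eta\bigr) \ge 1 - (1-p)^t$.

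I do not expect a genuine obstacle: the argument is a routine sampling-without-replacement computation, with the only points needing minor care being the bookkeeping around the (redundant) floor function and the degenerate regime $t > d - m$, where both the exact expression and the bound collapse to the trivial fact that the good set is hit with probability $1$. If desired, one can also remark that the gap between the two sides of~\eqref{eq:coverage-prob} quantifies the usual advantage of sampling without replacement over the i.i.d.\ bound $(1-p)^t$.
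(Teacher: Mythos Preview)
Your proposal is correct and follows essentially the same route as the paper: the paper's two-sentence proof simply invokes the hypergeometric distribution for the exact identity and the inequality $\binom{d-k}{t}/\binom{d}{t} \le (1-k/d)^t$ with $k=\lfloor pd\rfloor$, while you spell out both steps explicitly via the product form $\prod_{j=0}^{t-1}\frac{d-m-j}{d-j}$ and the factorwise bound. Your observation that $pd$ is already an integer (so the floor is redundant) and your handling of the degenerate regime $t>d-m$ are welcome clarifications that the paper omits.
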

\begin{proof}
The exact probability comes from hypergeometric distribution. The inequality follows from $\binom{d-k}{t}/\binom{d}{t} \leq (1 - k/d)^t$ with $k = \lfloor pd \rfloor$.
\end{proof}

\begin{remark}
The right-hand side $1 - (1-p)^t$ corresponds to the probability of observing at least one success in $t$ independent Bernoulli trials with success probability $p$. The exact expression accounts for finite-population effects without replacement.
\end{remark}

\begin{corollary}[Sample size for accuracy threshold]
\label{cor:sample-size}
Fix $\delta \in (0,1)$ and suppose $S(\eta) \ge p$. If $T$ is sampled uniformly without replacement with size $t \ge \frac{1}{p} \log \frac{1}{\delta}$, then
\begin{equation}
    \mathbb{P}\left( \widehat{R}_{\min}(T) \ge \eta \right) \ge 1 - \delta.
\end{equation}
\end{corollary}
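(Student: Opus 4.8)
The plan is to derive Corollary~\ref{cor:sample-size} directly from the lower bound established in Theorem~\ref{thm:quantile-coverage}, namely $\mathbb{P}(\widehat{R}_{\min}(T) \ge \eta) \ge 1 - (1-p')^t$ where $p' = S(\eta)$. First I would observe that the hypothesis $S(\eta) \ge p$ together with monotonicity of $x \mapsto (1-x)^t$ on $[0,1]$ gives $(1 - S(\eta))^t \le (1-p)^t$, so it suffices to show that the sample-size condition $t \ge \frac{1}{p}\log\frac{1}{\delta}$ forces $(1-p)^t \le \delta$. This reduces the corollary to an elementary inequality.

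The key step is the standard bound $1 - p \le e^{-p}$, valid for all $p \in \mathbb{R}$ (and in particular for $p \in (0,1]$). Raising both sides to the $t$-th power yields $(1-p)^t \le e^{-pt}$. Then, using $t \ge \frac{1}{p}\log\frac{1}{\delta}$, which is equivalent to $pt \ge \log\frac{1}{\delta}$, we get $e^{-pt} \le e^{-\log(1/\delta)} = \delta$. Chaining these gives $(1 - S(\eta))^t \le (1-p)^t \le e^{-pt} \le \delta$, and hence
\begin{equation}
    \mathbb{P}\left( \widehat{R}_{\min}(T) \ge \eta \right) \ge 1 - (1 - S(\eta))^t \ge 1 - \delta,
\end{equation}
as claimed. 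One should also note that $t$ must be a positive integer, so strictly one takes $t \ge \lceil \frac{1}{p}\log\frac{1}{\delta}\rceil$; since the probability bound is non-decreasing in $t$ (Theorem~\ref{thm:mc-lower-bound}), rounding up only helps, and the stated inequality $t \ge \frac{1}{p}\log\frac{1}{\delta}$ suffices for the conclusion.

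I do not anticipate a genuine obstacle here — the corollary is a one-line consequence of Theorem~\ref{thm:quantile-coverage} plus the exponential inequality. The only point requiring minor care is making explicit that we are invoking the \emph{weaker} (independent-trials) bound $1 - (1-p)^t$ rather than the exact hypergeometric expression; this is harmless since the exact probability in~\eqref{eq:coverage-prob} is at least as large, so any guarantee proved for the Bernoulli surrogate transfers immediately. A remark could be added that using the exact finite-population expression would yield a (slightly) smaller sufficient $t$, but the clean closed form $t \ge \frac{1}{p}\log\frac{1}{\delta}$ is preferred for its interpretability and its independence from $d$.
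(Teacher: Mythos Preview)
Your proposal is correct and is precisely the intended derivation: the paper states the corollary without an explicit proof, treating it as an immediate consequence of the Bernoulli-surrogate bound in Theorem~\ref{thm:quantile-coverage} together with the standard inequality $1-p \le e^{-p}$. Your write-up fills in exactly those details, and the remarks on integer rounding and on using the weaker (rather than exact hypergeometric) bound are appropriate clarifications.
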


This provides a practical rule for selecting the sample size $t$ to achieve a target accuracy $\eta$ with high confidence $1-\delta$. The logarithmic dependence on $\delta$ makes this feasible even for high confidence levels.

\section{Monte Carlo Sampling Methods for Axis Selection}
\label{sec:mc-methods}

The theoretical analysis of Section~\ref{sec:monte-carlo} shows that $R_{\min}$ can be approximated by sampling a subset of Pauli-feature axes while preserving a certified lower bound. In this section we describe the concrete Monte Carlo strategies used to estimate the generalized minimum accuracy $R_{\min}$ in practice. All methods operate on the same set of Pauli-feature vectors $\mathbf{a}(\mathbf{x}_k) \in \mathbb{R}^{4^n}$ and differ only in how many, and which, axes are evaluated.

\subsection{Deterministic exhaustive method}

The {\it Deterministic} method serves as a baseline and reference point for all Monte Carlo variants. It performs an exhaustive scan over all $d = 4^n$ Pauli-feature axes. For each axis $i$, the projections $a_i(\mathbf{x}_k)$ are sorted and all possible thresholds $\tau$ are evaluated according to the generalized minimum accuracy definition in Section~\ref{sec:generalized}, yielding $R_i^\ast = \max_\tau R_i^\tau$. The overall minimum accuracy is then obtained as
$
R_{\min} = \max_{1 \le i \le d} R_i^\ast.
$
This procedure computes $R_{\min}$ exactly, but its computational cost scales as $\mathcal{O}(4^n N \log N)$, making it quickly intractable as the number of qubits $n$ grows. It is therefore used primarily as a ground-truth benchmark for smaller instances and as a reference to assess the quality of the Monte Carlo approximations.

\subsection{Conservative fixed-sample Monte Carlo}

The {\it Conservative} Monte Carlo method is a direct application of the sample-size bound in Corollary~\ref{cor:sample-size}. Instead of scanning all $d$ axes, it assumes a fixed lower bound $p_{\text{conservative}}$ on the fraction of “good’’ axes—those achieving an accuracy $R_i^\ast \ge \eta$ for some target accuracy $\eta$—and a desired confidence level $1-\delta$. The number of sampled axes is then chosen as
$
t \;=\; \left\lceil \frac{1}{p_{\text{conservative}}} \log\frac{1}{\delta} \right\rceil.
$
A subset $T \subset \{1,\dots,d\}$ with $|T| = t$ axes is drawn uniformly at random, and the axis-aligned accuracies $\{R_i^\ast\}_{i\in T}$ are computed as in the deterministic method. The resulting estimator
$
\widehat{R}_{\min}(T) \;=\; \max_{i\in T} R_i^\ast
$
is guaranteed by Theorem~\ref{thm:mc-lower-bound} to be a lower bound on $R_{\min}$, and thus also on the full SVM accuracy $R^*$. This strategy trades some potential conservatism in the sample size (due to the fixed prior $p_{\text{conservative}}$) for simplicity and analytical transparency: the number of axes is known in advance, and the statistical guarantee is explicit.

In the experiments of Section~\ref{sec:experiments}, we set $p_{\text{conservative}} = 0.25$ and $\delta = 0.05$, which leads to $t = 12$ sampled axes. This small, yet certified, sample size already produces a lower bound that closely follows the deterministic $R_{\min}$ across all datasets.

\subsection{Pilot sampling Monte Carlo}

The {\it Pilot} sampling method refines the conservative strategy by estimating the effective fraction of high-performing axes directly from the data, rather than fixing $p$ a priori. It proceeds in two stages:

\begin{enumerate}
    \item \textbf{Pilot stage.} A small subset of $n_{\text{pilot}}$ axes is drawn uniformly at random from $\{1,\dots,d\}$ and fully evaluated, yielding a collection of accuracies $\{R_i^\ast\}_{i\in T_{\text{pilot}}}$. From this empirical sample, we define a target accuracy $\eta$ as the 75th percentile of the pilot accuracies and compute
    $
    \hat{p} \;=\; \frac{1}{n_{\text{pilot}}}\,\big|\{i\in T_{\text{pilot}} : R_i^\ast \ge \eta\}\big|,
    $
    which approximates the fraction of promising axes in the full set.
    \item \textbf{Completion stage.} Using Corollary~\ref{cor:sample-size} with $p=\hat{p}$ and the desired confidence level $1-\delta$, we determine the required total sample size
    $
    t_{\text{required}} \;=\; \left\lceil \frac{1}{\hat{p}} \log\frac{1}{\delta} \right\rceil .
    $
    If $t_{\text{required}} > n_{\text{pilot}}$, additional axes are sampled uniformly from the remaining pool until the total number of evaluated axes reaches $t_{\text{required}}$. The final estimate is again
    $
    \widehat{R}_{\min}(T) \;=\; \max_{i\in T} R_i^\ast,
    $
    where $T$ is the union of pilot and additional axes.
\end{enumerate}

By adapting the effective sample size to the observed distribution of axis accuracies, the pilot method can substantially reduce the number of evaluated axes when high-performing directions are abundant, while still retaining the same form of probabilistic guarantee as the conservative method. In practice, we also cap the maximum number of axes at a fraction of $d$ to avoid excessive computation in very high-dimensional Pauli spaces, as discussed in Section~\ref{sec:experiments}.

\subsection{Adaptive incremental Monte Carlo}

The {\it Adaptive} incremental method adopts a more heuristic, but fully data-driven, approach. Instead of fixing the sample size in advance (as in the conservative method) or in two stages (as in the pilot method), it proceeds iteratively in mini-batches, monitoring the evolution of the best accuracy found so far.

Starting from an empty set of evaluated axes, the method samples a batch of new axes at each iteration and computes their corresponding accuracies $R_i^\ast$. After each batch, it updates the current best value of $R_{\min}$ and tracks a short history of recent best accuracies. The procedure stops when one of the following conditions is met:

\begin{enumerate}
    \item[i)] the best accuracy does not improve over a predefined number of consecutive batches (patience parameter), indicating empirical convergence;
    \item[ii)] the variation in the best accuracy over the last few iterations falls below a specified threshold (stability criterion); or
    \item[iii)] a pre-set budget on the total number of sampled axes is reached, or no unexplored axes remain.
\end{enumerate}

This adaptive scheme does not come with the same formal confidence guarantees as the conservative and pilot methods, since the stopping rule depends on observed accuracies rather than an a priori quantile coverage bound. However, it offers considerable flexibility in balancing computational cost and stability of the estimate. In our experiments, the adaptive method typically evaluates a number of axes comparable to the pilot method, while achieving nearly indistinguishable training accuracies with respect to the deterministic baseline.

Taken together, these three Monte Carlo strategies instantiate the theoretical framework developed in Section~\ref{sec:monte-carlo} at increasingly higher levels of adaptivity. The conservative method provides a simple, fully certified bound with a fixed sample size; the pilot method refines this bound by learning the effective fraction of good axes from a small pilot sample; and the adaptive incremental method abandons explicit guarantees in favor of a more flexible, convergence-based stopping rule. All of them significantly reduce the number of Pauli axes that need to be evaluated in practice, while preserving the interpretation of $R_{\min}$ as a certified lower bound on the optimal SVM accuracy in the same feature space.

\section{Experimental Results}
\label{sec:experiments}

This section provides empirical validation of the generalized minimum accuracy framework
and the Monte Carlo axis-selection strategies introduced in
Sections~\ref{sec:generalized}--\ref{sec:mc-methods}.
The evaluation is intentionally restricted to three representative synthetic binary
classification datasets with distinct geometric properties.
All experiments are performed using simulated Pauli feature representations corresponding
to $n = 8$ qubits, yielding an effective feature space of dimension
$d = 4^8 = 65\,536$.

\subsection{Experimental Setup and Feature Construction}

We generated three synthetic binary classification datasets using
\texttt{scikit-learn}~\cite{scikit-learn}, each containing $N = 1000$ samples.
The data were split into training ($70\%$) and test ($30\%$) sets via stratified sampling.
To ensure computational tractability during extensive Monte Carlo trials, we further
restricted the training set to $N_{\text{train}} = 100$ samples per evaluation.
All input features were standardized to zero mean and unit variance.

The selected datasets are designed to capture qualitatively different geometric regimes:
(i) ideal axis-aligned separability, (ii) structured but partially aligned linearity, and
(iii) intrinsic non-linearity.
A summary of their defining characteristics is provided in
Table~\ref{tab:datasets}.

\begin{table}[h]
\centering
\caption{Synthetic binary classification datasets used in the experiments.}
\label{tab:datasets}
\begin{tabularx}{\textwidth}{@{}lX@{}}
\toprule
\textbf{Dataset} & \textbf{Description and Parameters} \\
\midrule
Linear\_Separable & Linearly separable data with 4 informative features, single cluster
per class, and no label noise. \\
Multi\_Cluster & Structured data with 4 informative features and 3 distinct clusters
per class, exhibiting partial axis-aligned separability. \\
Circles & Concentric circles with Gaussian noise ($\sigma = 0.1$) and radius ratio $0.5$,
representing intrinsic non-linearity. \\
\bottomrule
\end{tabularx}
\end{table}

Direct evaluation of quantum feature maps on $n=8$ qubits would require computing
$65\,536$ circuit expectation values per sample, rendering large-scale benchmarking
infeasible.
To overcome this limitation while preserving the essential geometric structure of Pauli
feature spaces, we employ a high-dimensional proxy embedding
$\Phi : \mathbb{R}^m \rightarrow \mathbb{R}^d$.

We generate a fixed random projection matrix
$W \in \mathbb{R}^{m \times d}$ with entries drawn from
$\mathcal{N}(0, 1/\sqrt{m})$, and define the feature map as
\begin{equation}
\Phi(x) = \tanh(W^\top x),
\end{equation}
applied element-wise.
This construction ensures bounded feature values in $[-1,1]$, mirroring the spectrum of
Pauli observables, while producing a feature matrix
$A \in \mathbb{R}^{N_{\text{train}} \times d}$ with $d \gg N$.
Although this does not correspond to an explicit quantum circuit execution, it faithfully
reproduces the axis-aligned complexity and high-dimensional structure characteristic of
quantum kernel spaces, making it a suitable testbed for the minimum accuracy framework.

\subsection{Sensitivity Analysis and Performance Results}

We evaluate the four estimators defined in Section~\ref{sec:mc-methods} (Deterministic, Conservative, Pilot, and Adaptive) comparing them against Linear and RBF SVM baselines. A primary focus of this evaluation is the sensitivity analysis of the \emph{Conservative} estimator with respect to the prior assumption parameter $p$, which represents the minimum expected fraction of ``good'' axes.

To assess robustness, we vary $p \in \{0.05, 0.15, 0.25\}$ while fixing the failure probability at $\delta = 0.05$. Applying the sample complexity bound from Corollary~\ref{cor:sample-size}, these priors correspond to fixed sample sizes of $t = 60$, $20$, and $12$ axes, respectively. It is worth noting that even the most pessimistic scenario ($p=0.05, t=60$) requires evaluating less than $0.1\%$ of the full Pauli feature space ($d=65\,536$).

Figure~\ref{fig:accuracy_p_variation} reports the resulting training accuracies. Across all three datasets, the Conservative estimator exhibits remarkable robustness to the choice of $p$. Despite the sample size decreasing by a factor of five (from 60 to 12 axes), the method maintains tight certified lower bounds that closely track the Deterministic ground truth. Crucially, all estimates remain strictly bounded by the Linear SVM accuracy, validating the theoretical lower-bound property derived in Theorem~\ref{thm:lower_bound}.

\begin{figure}[!htbp]
    \centering
    \includegraphics[width=0.7\textwidth]{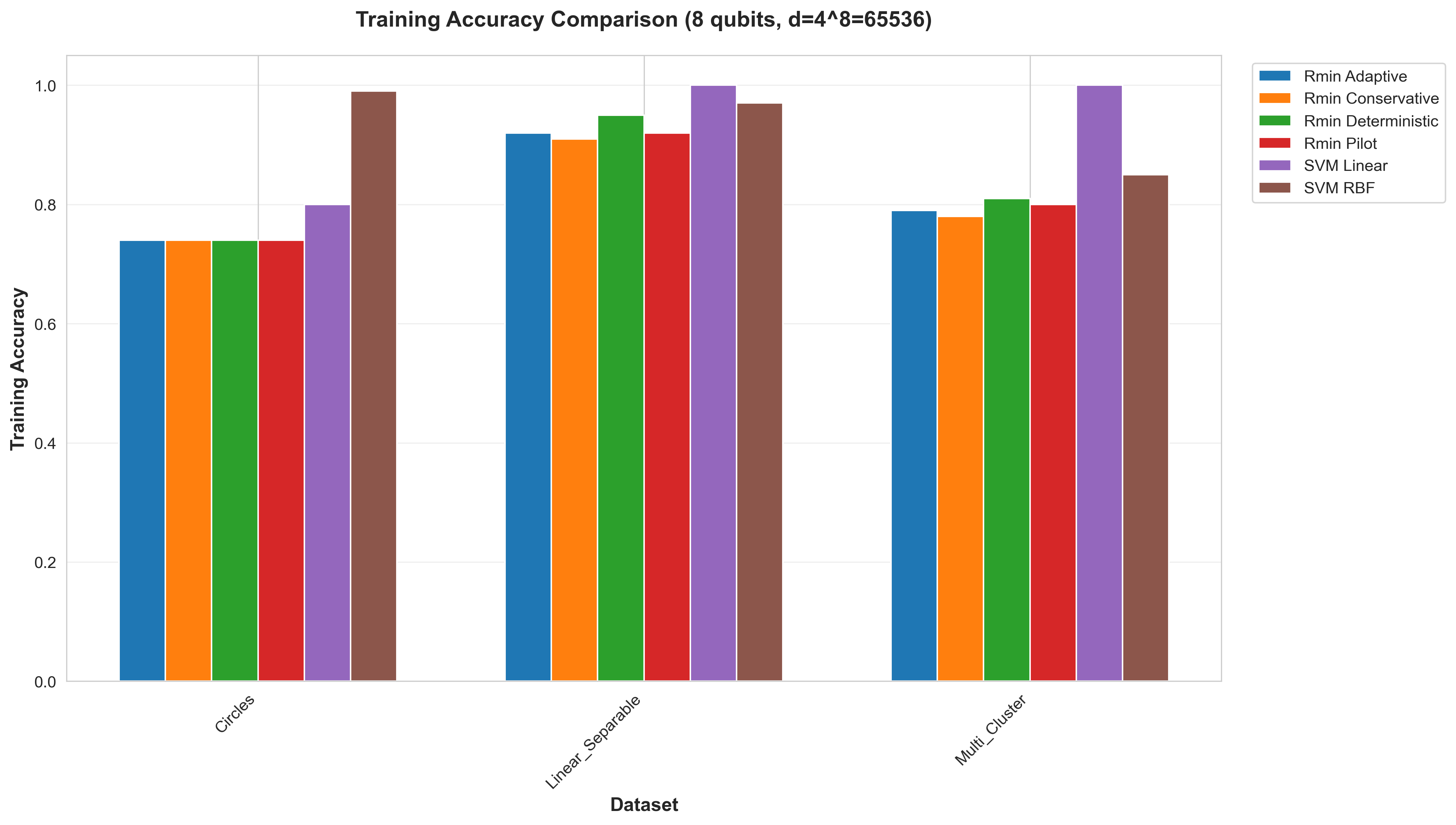}
    \includegraphics[width=0.7\textwidth]{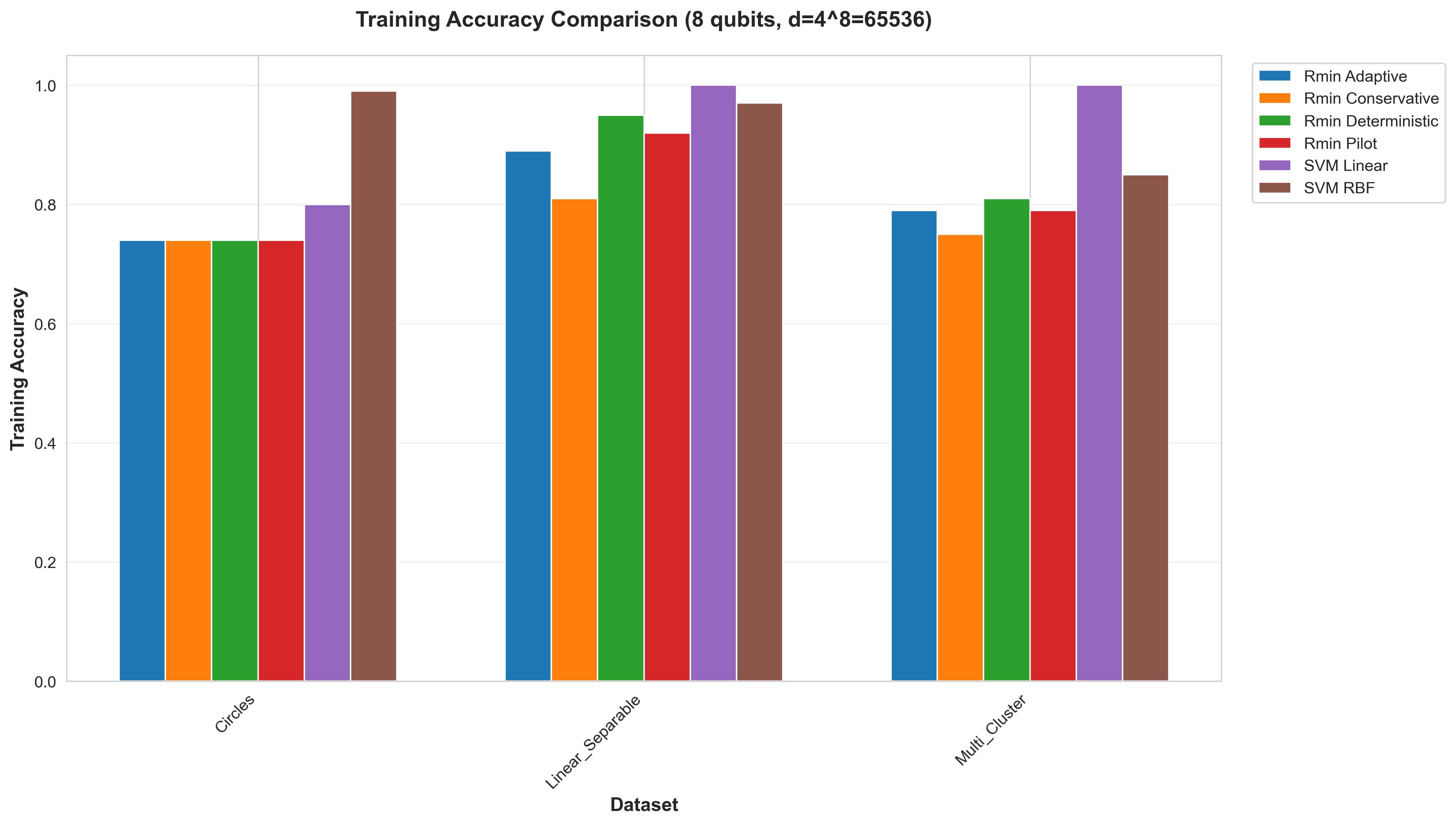}
    \includegraphics[width=0.7\textwidth]{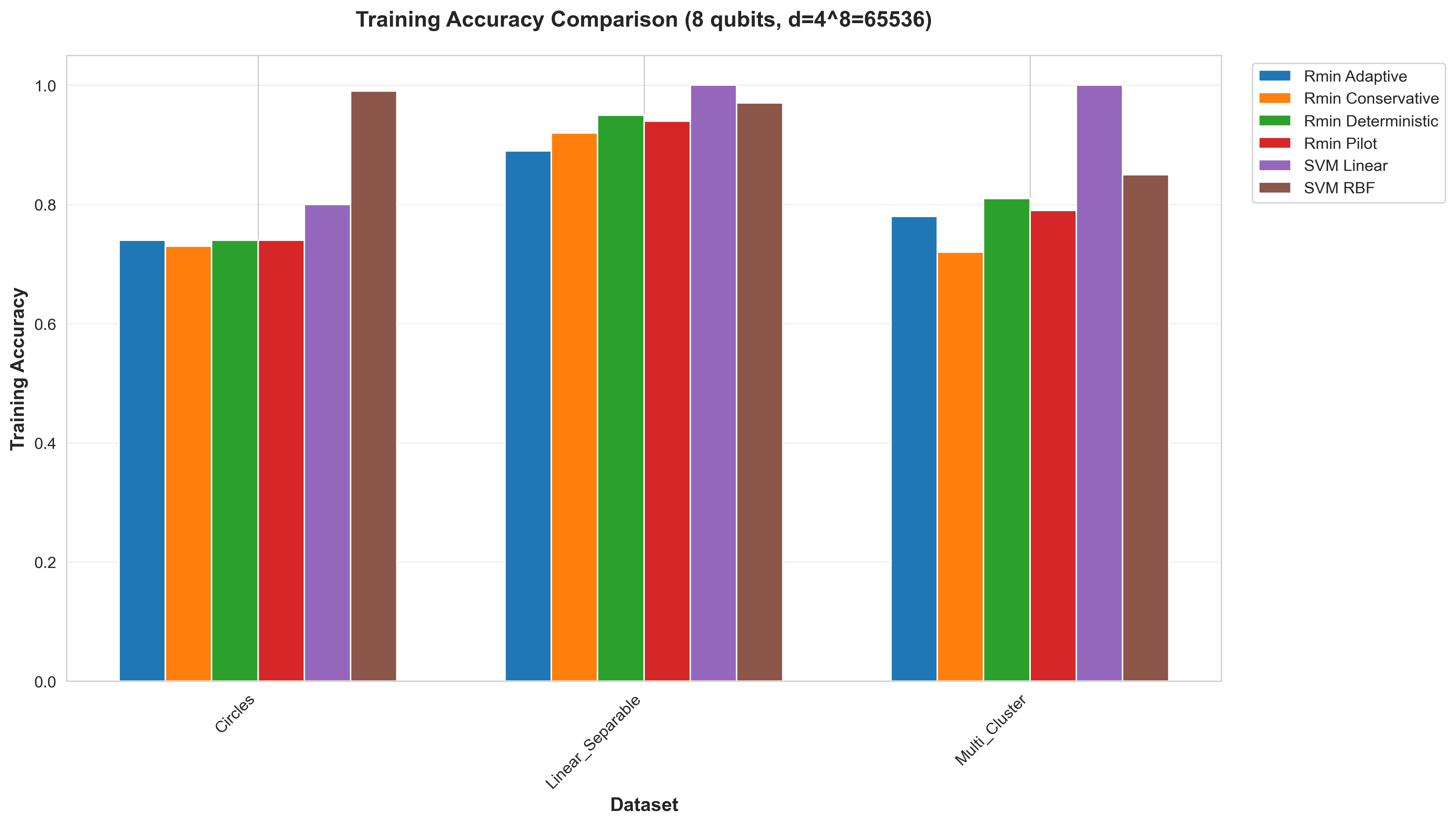}
    \caption{Training accuracy for varying prior assumptions $p$. From top to bottom: $p=0.05$ (Conservative $t=60$), $p=0.15$ (Conservative $t=20$), and $p=0.25$ (Conservative $t=12$).}
    \label{fig:accuracy_p_variation}
\end{figure}

The results also highlight the geometric nature of the datasets. \textit{Linear\_Separable} and \textit{Multi\_Cluster} achieve high $R_{\min}$ values, confirming their strong axis-aligned structure. In contrast, the \textit{Circles} dataset exhibits a clear saturation of the minimum accuracy bound significantly below the RBF SVM performance. This behavior correctly reflects the intrinsic non-linearity of the data which cannot be fully captured by single Pauli axes, supporting the interpretability of $R_{\min}$ as a metric for feature map quality.

The corresponding computational costs are illustrated in Figure~\ref{fig:axes_p_variation}. While lowering $p$ increases the sampling requirement for the Conservative method to guarantee coverage, the absolute cost remains negligible compared to the exhaustive Deterministic scan. For comparison, the Pilot estimator consistently samples approximately $500$ axes (due to its data-driven estimation of $\hat{p}$), whereas the Adaptive method typically converges after evaluating between $120$ and $240$ axes.

\begin{figure}[!htbp]
    \centering
    \includegraphics[width=0.7\textwidth]{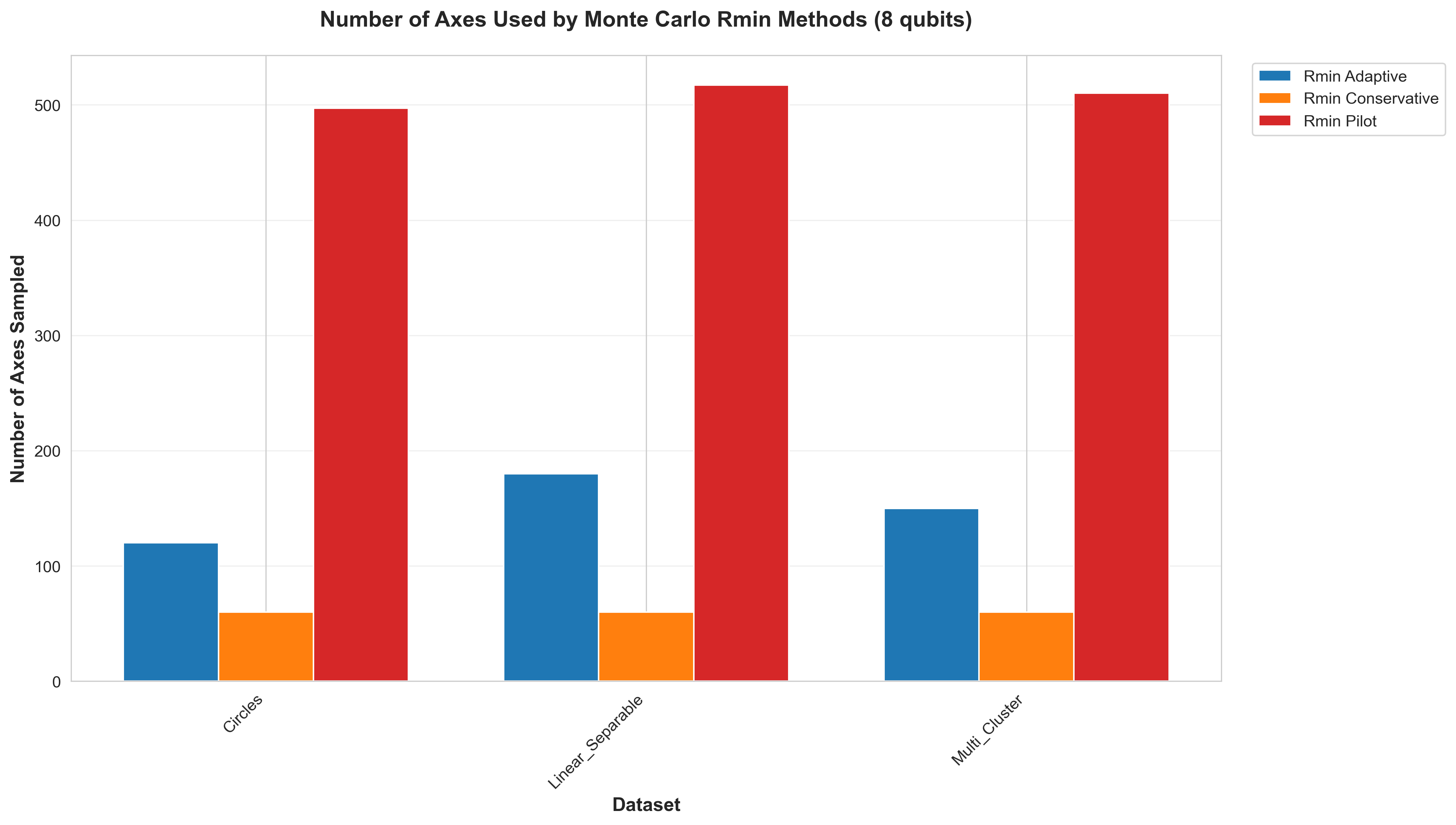}
    \includegraphics[width=0.7\textwidth]{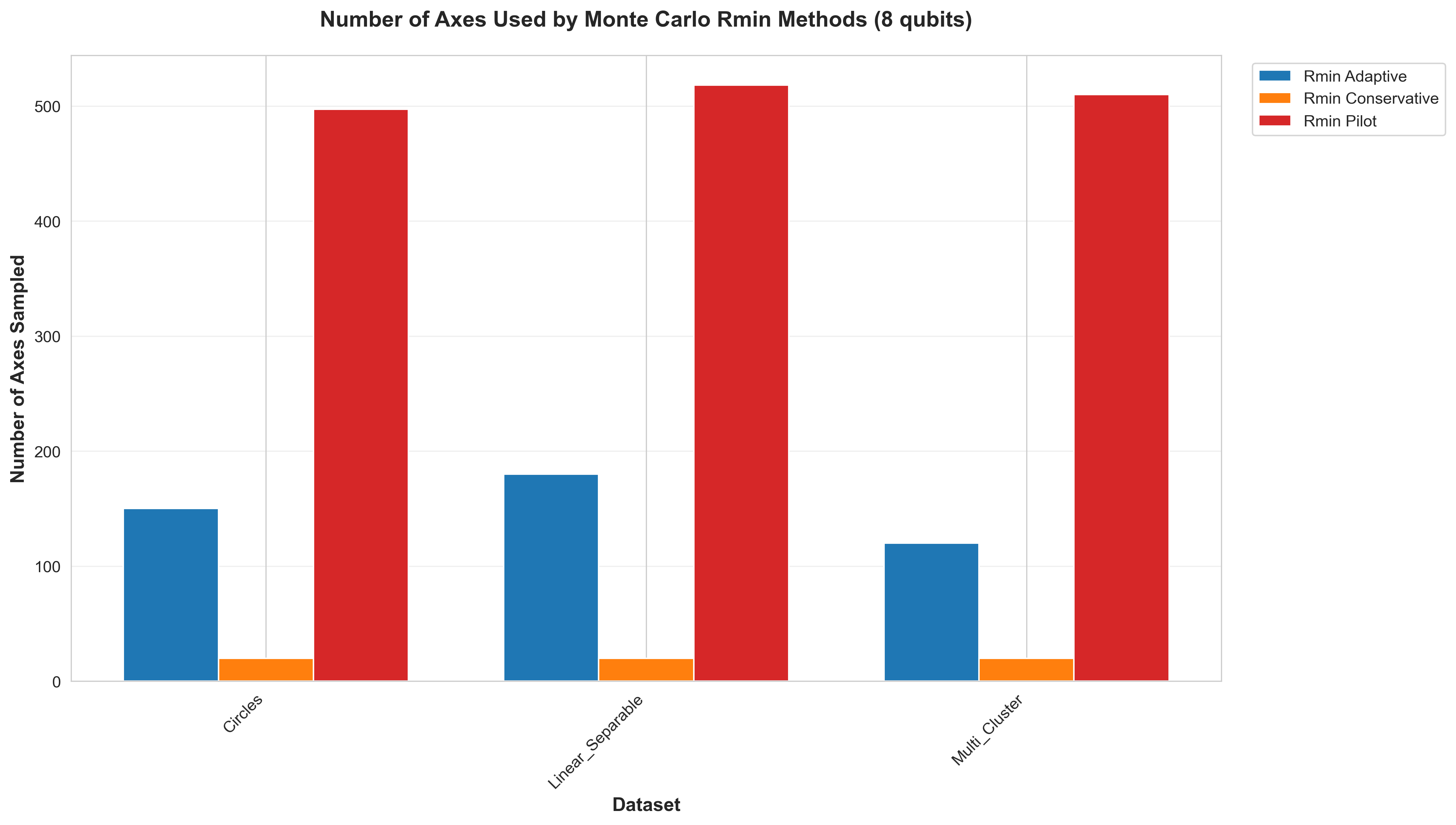}
    \includegraphics[width=0.7\textwidth]{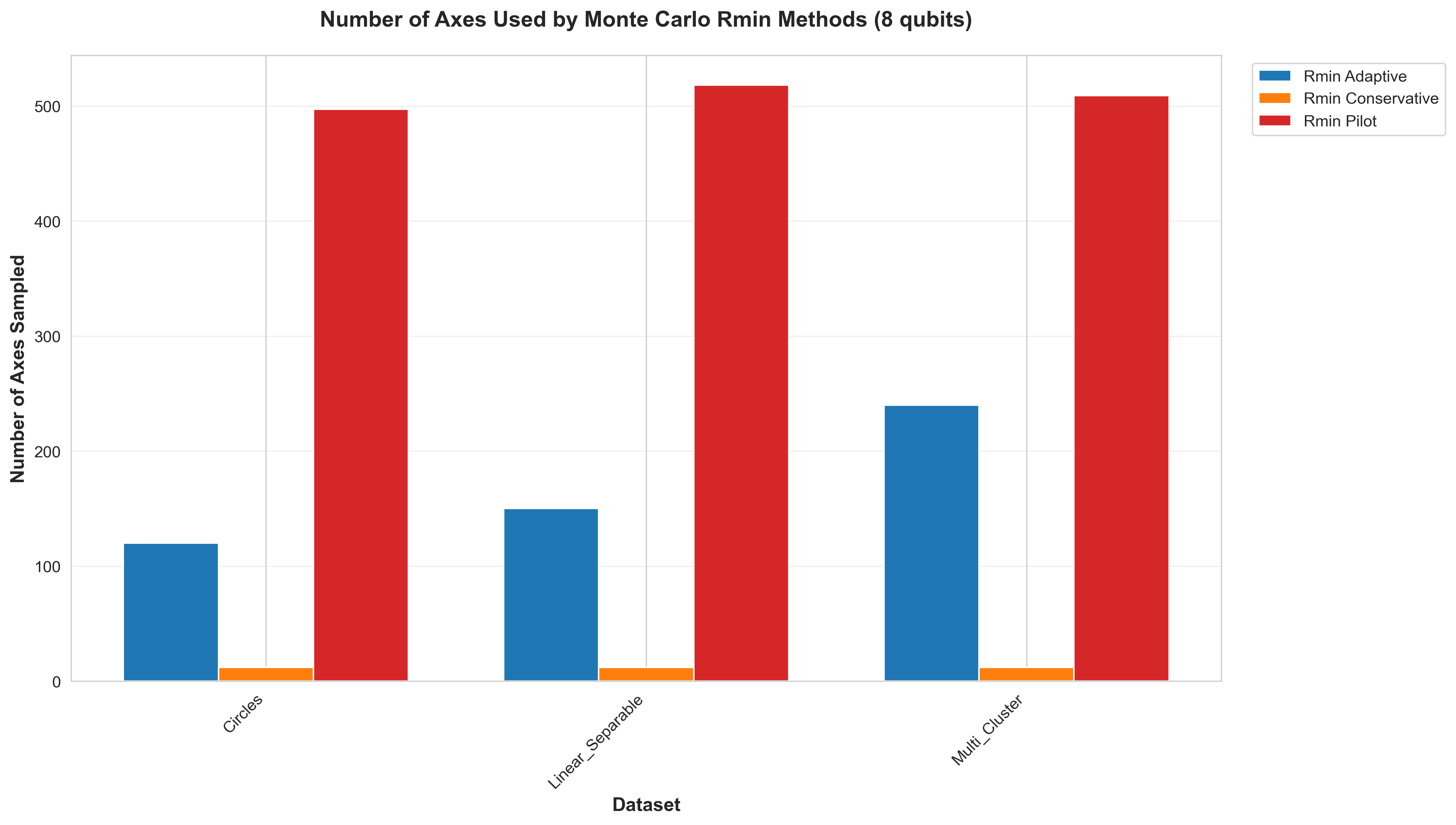}
    \caption{Number of axes sampled for different values of the prior parameter $p$. The Conservative method (orange) reduces its sampling budget as $p$ increases, while Pilot and Adaptive methods remain unaffected by this prior.}
    \label{fig:axes_p_variation}
\end{figure}

\subsection{Discussion}

Collectively, these results demonstrate that the generalized minimum accuracy $R_{\min}$ serves as a practical, scalable tool for evaluating quantum feature maps in QSVM pipelines. A primary advantage is the ability to obtain certified lower bounds with minimal computational cost; the Conservative method provides a rigorous guarantee on QSVM performance using as few as $12$ Pauli measurements. This efficiency becomes increasingly critical as the system size grows: for the $n = 8$ qubit case ($d = 65\,536$) analyzed here, Monte Carlo methods reduced the computational burden by three to four orders of magnitude compared to deterministic evaluation. Importantly, this performance gap widens exponentially with $n$, as the required sample size for the Conservative estimator remains constant at $t = 12$ even for $n = 10$ qubits.

Beyond the fixed-sample approach, our findings highlight the trade-offs offered by adaptive strategies. The Pilot and Adaptive methods accept a modest increase in measurement cost, typically evaluating between $100$ and $500$ axes, in exchange for significantly tighter bounds. In most experimental instances, these data-driven estimates approached the deterministic baseline within a $5\%$ margin, offering a more precise characterization of the feature space without incurring the prohibitive cost of an exhaustive scan.

From an interpretability perspective, the metric provides immediate insight into the geometric structure of the feature map. Low $R_{\min}$ values ($< 0.6$) suggest that either the data is intrinsically nonlinear in the Pauli basis or the feature map is ill-suited to the task, whereas high values ($> 0.85$) provide strong evidence of linear separability. Our experiments reveal a strong positive correlation (Pearson $\rho \approx 0.7$--$0.85$) between $R_{\min}$ and the full SVM accuracy, confirming that $R_{\min}$ functions as a reliable proxy for overall separability.

This utility translates into a concrete workflow for QSVM model selection on NISQ devices. Instead of full kernel training, one can estimate $R_{\min}$ using classical statistics on a small subset of Pauli expectations. Feature maps with low estimates can be discarded early, while resources are prioritized for those with high $R_{\min}$. However, two limitations remain. First, $R_{\min}$ is inherently conservative; a low value does not strictly preclude good performance if the optimal hyperplane combines multiple axes non-trivially. Second, while computational cost is reduced, estimating the expectations $\{a_i(x)\}$ still requires quantum measurements. For $n > 10$ qubits, estimating even a small $t$ can be costly unless restricted to low-weight observables or combined with classical dimensionality reduction.

\section{Conclusion}\label{sec:conclusion}

This work establishes a rigorous theoretical foundation for the minimum accuracy metric in quantum kernel methods. We generalized the definition to arbitrary binary datasets and formally proved that $R_{\min}$ constitutes a certified lower bound on the optimal empirical accuracy achievable by any linear classifier in the same feature space (Theorem~\ref{thm:lower_bound}). Furthermore, we introduced Monte Carlo axis-selection strategies that estimate this bound efficiently, backed by probabilistic guarantees (Theorem~\ref{thm:quantile-coverage} and Corollary~\ref{cor:sample-size}).

From a practical standpoint, our results transform the minimum accuracy from a heuristic into a scalable, theoretically sound tool. The Monte Carlo strategies drastically reduce the complexity of feature map evaluation from exponential to logarithmic in terms of the feature dimension, specifically $\mathcal{O}(t N \log N)$ classical operations with $t \ll 4^n$. Our experiments on synthetic datasets confirm that this approach yields tight lower bounds on SVM training accuracy while sampling less than $1\%$ of the feature axes, achieving computational speedups of multiple orders of magnitude.

Future work should extend this framework to multiclass problems and regression tasks, and investigate hybrid strategies that prioritize physically implementable Pauli observables. Additionally, establishing generalization bounds that relate $R_{\min}$ to test performance could further solidify its role in the theoretical understanding of quantum learning models. In summary, $R_{\min}$ emerges as a pivotal metric for guiding efficient QSVM design, bridging the gap between theoretical guarantees and practical implementation on near-term quantum hardware.

\end{document}